\newtheoremstyle{mytheorem}
  {}
  {}
  {\itshape}
  {}
  {}
  {.}
  { }
  {}
\theoremstyle{mytheorem}
\newtheorem{thm}{Theorem}
\newtheorem{cor}[thm]{Corollary}
\newtheorem{lem}[thm]{Lemma}
\newtheorem{prop}[thm]{Proposition}
\theoremstyle{remark}
\newtheorem{defn}[thm]{Definition}
\newtheorem{rem}[thm]{Remark}
\newtheorem{ex}[thm]{Example}
\newcommand{\Fq}{\mathbb{F}_{q}}
\newcommand{\F}{{\mathbb F}}
\newcommand{\K}{\mathbb{K}}
\newcommand{\ie}{\emph{i.e.}}
\newcommand{\sk}{\text{\hspace{1pt}-\hspace{1pt}}}
\newcommand{\se}{\text{\hspace{1pt}=\hspace{1pt}}}
\begin{document}
\title{Spectral Bounds for Quasi-Twisted Codes}

\author{Martianus Frederic Ezerman, San Ling, Buket \"{O}zkaya, and Jareena Tharnnukhroh
	\thanks{The authors are with the School of Physical and Mathematical Sciences, Nanyang Technological University, 21 Nanyang Link, Singapore 637371, e-mails: ${\rm\{fredezerman,\, lingsan,\,buketozkaya,\,jareena001\}}$@ntu.edu.sg.}
	\thanks{
	M.~F.~Ezerman, S.~Ling, and B.~\"{O}zkaya are supported by Nanyang Technological University Research Grant M4080456.}
	\thanks{J.~Tharnnukhroh's scholarship is from the Development and Promotion of Science and Technology (DPST) talent project of Thailand.}
	\thanks{This work has been accepted for presentation at the International Symposium on Information Theory ISIT 2019. Copyright (c) 2019 IEEE. Personal use of this material is permitted. However, permission to use this material for any other purposes must be obtained from the IEEE by sending a request to pubs-permissions@ieee.org.}
}

\maketitle

\begin{abstract}
New lower bounds on the minimum distance of quasi-twisted codes over finite fields are proposed. They are based on spectral analysis and eigenvalues of polynomial matrices. They generalize the Semenov-Trifonov and Zeh-Ling bounds in a manner similar to how the Roos and shift bounds extend the BCH and HT bounds for cyclic codes. 
\end{abstract}

\begin{IEEEkeywords}
Quasi-twisted code, Roos bound, shift bound, eigenvalues, polynomial matrices, spectral analysis.
\end{IEEEkeywords}

%
\IEEEpeerreviewmaketitle

\section{Introduction}
\label{intro}

Quasi-twisted (QT) codes form an important class of block codes that includes cyclic codes, quasi-cyclic (QC) codes and constacyclic codes as special subclasses. In addition to their rich algebraic structure (\cite{Y}), QT codes are also asymptotically good (\cite{C,DH}) and they yield good parameters (\cite{AGOSS1,AGOSS2}).

Several bounds on the minimum distance of cyclic codes had been derived. The first and perhaps the most famous one was the BCH bound, given by Bose and Chaudhuri (\cite{BC}), and by Hocquenghem (\cite{H}). An extension of the bound was formulated by Hartmann and Tzeng in \cite{HT}. One can consider the HT bound as a two-directional BCH bound. The Roos bound in \cite{R2} generalized this idea further by allowing the HT bound to have a certain number of gaps in both directions. The Roos bound was extended to constacyclic codes in \cite{RZ}. Another remarkable extension of the HT bound, known as the shift bound, was introduced by van Lint and Wilson in \cite{LW}. This bound is known to be particularly powerful on many non-binary codes (\cite{EL}). 

Despite being interesting from both theoretical and practical points of view, studies on the minimum distance estimates for QC and QT codes are not as rich as for cyclic and constacyclic codes. Semenov and Trifonov developed a spectral analysis of QC codes (\cite{ST}), based on the work done by Lally and Fitzpatrick in~\cite{LF}, and formulated a BCH-like bound, together with a comparison with a few other bounds for QC codes. Their approach is generalized by Zeh and Ling, by using the HT bound, in \cite{ZL}.

This paper is organized as follows. Section \ref{basics} recalls necessary background material and adapts the spectral method of Semenov-Trifonov to QT codes. We formulate and prove a generalized spectral bound on the minimum distance in Section \ref{bounds section}, where the Roos and shift bounds for QT codes are derived as special cases. Section \ref{exmp section} supplies numerical examples showing how the proposed bound performs in comparison with the Semenov-Trifonov (ST) and Zeh-Ling (ZL) bounds.
 
\section{Background}\label{basics}
\subsection{Constacyclic codes and minimum distance bounds from their defining sets}
Let $\Fq$ denote the finite field with $q$ elements, where $q$ is a prime power. Let $m$ be, throughout, a positive integer with $\gcd(m,q)=1$. For some nonzero element $\lambda \in \Fq$, a linear code $C\subseteq \Fq^m$ is called a {\it $\lambda$-constacyclic code} if it is invariant under the $\lambda$-constashift of codewords, \ie, $(c_0,\ldots,c_{m-1}) \in C$ implies $(\lambda c_{m-1},c_0,\ldots,c_{m-2}) \in C$. In particular, if $\lambda = 1$ or $q=2$, then $C$ is a cyclic code.

Consider the principal ideal $I=\langle x^m-\lambda \rangle$ of $\Fq[x]$ and define the residue class ring $R:=\Fq[x]/I$. To a vector $\vec{a}\in \Fq^m$, we associate an element of $R$ via the isomorphism:
\begin{eqnarray}\label{identification-1}
\phi: \ \F_q^{m} & \longrightarrow & R  \\
\vec{a}=(a_0,\ldots,a_{m-1}) & \longmapsto & a(x)= a_0+\cdots + a_{m-1}x^{m-1}.\nonumber
\end{eqnarray}
Note that the $\lambda$-constashift in $\F_q^{m}$ amounts to multiplication by $x$ in $R$. Hence, a $\lambda$-constacyclic code $C\subseteq \Fq^m$ can be viewed as an ideal of $R$. Since $R$ is a principal ideal ring, there exists a unique monic polynomial $g(x)\in R$ such that $C=\langle g(x)\rangle$, \ie, each codeword $c(x)\in C$ is of the form $c(x)=a(x)g(x)$, for some $a(x)\in R$. The polynomial $g(x)$, which is a divisor of $x^m-\lambda$, is called the {\it generator polynomial} of $C$. 

Let $\mbox{wt}(c)$ denote the number of nonzero coefficients in $c(x)\in C$. Recall that the minimum distance of $C$ is defined as $d(C):=\min\{\mbox{wt}(c) : 0\neq c(x)\in C\}$ when $C$ is not the trivial zero code. For any positive integer $p$, let $\vec{0}_p$ denote throughout the all-zero vector of length $p$. A $\lambda$-constacyclic code $C=\{\vec{0}_m\}$ if and only if $g(x)=x^m-\lambda$. 

The roots of $x^m-\lambda$ are of the form $\alpha, \alpha\xi, \ldots, \alpha\xi^{m-1}$, where  $\alpha$ is a fixed $m^{th}$ root of $\lambda$ and $\xi$ is a fixed primitive $m^{th}$ root of unity. Henceforth, let $\Omega :=\{\alpha\xi^k\ :\ 0\leq k \leq m-1\}$ be the set of all $m^{th}$ roots of $\lambda$ and let $\F$ be the smallest extension of $\Fq$ that contains $\Omega$ (equivalently, $\F$ is the splitting field of $x^m-\lambda$). Given the $\lambda$-constacyclic code $C=\langle g(x)\rangle$, the set $L:=\{\alpha\xi^k\ :\ g(\alpha\xi^k)=0\}\subseteq \Omega$ of roots of its generator polynomial is called the {\it defining set} of $C$. Note that $\alpha\xi^k\in L$ implies $\alpha\xi^{qk}\in L$, for each $k$. A nonempty subset $E\subseteq\Omega$ is said to be {\it consecutive} if there exist integers $e,n$ and $\delta$ with $e\geq 0,\delta \geq 2, n> 0$ and $\gcd(m,n)=1$ such that
\begin{equation} \label{cons zero set}
E:=\{\alpha\xi^{e+zn}\ :\ 0\leq z\leq \delta-2\}\subseteq\Omega.
\end{equation}

We now describe the Roos bound for constacyclic codes (see \cite[Theorem 2]{R2} for the original Roos bound for cyclic codes). For $P\subseteq\Omega$, let $C_P$ denote any $\lambda$-constacyclic code of length $m$ over $\F_q$, whose defining set contains $P$. Let $d_P$ denote the minimum distance of $C_P$. 

\begin{thm}~\cite[Theorem 6]{RZ} (Roos bound) \label{Roos} Let $N$ and $M$ be two nonempty subsets of $\Omega$. If there exists a consecutive set $M'$ containing $M$ such that $|M'| \leq |M| + d_N -2$, then we have $d_{MN}\geq |M| + d_N -1$, where 
$\displaystyle{
MN:=\frac{1}{\alpha}\bigcup_{\varepsilon\in M} \varepsilon N}$.
\end{thm}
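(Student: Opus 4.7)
Plan: I argue by contradiction. Suppose $0 \neq c \in C_{MN}$ has weight $w \leq |M| + d_N - 2$, with support $S = \{s_1, \ldots, s_w\}$. The easy regime $w < d_N$ is dispatched by a short scaling argument: fix any $\varepsilon \in M$ and consider $c_\varepsilon(x) := \sum_i c_i (\varepsilon/\alpha)^i x^i$. This polynomial has the same nonzero weight $w$ as $c$, and because $c_\varepsilon(\eta) = c(\varepsilon\eta/\alpha) = 0$ for every $\eta \in N$, it vanishes on all of $N$; invoking the minimum distance of the $N$-vanishing code forces $w \geq d_N$, a contradiction. So I may assume $w \geq d_N$ henceforth.

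For the remaining regime I package the vanishing conditions into a single matrix equation. Introduce $X \in \F^{|M'| \times w}$ with $X_{\varepsilon,j} = (\varepsilon/\alpha)^{s_j}$, $Y \in \F^{|N| \times w}$ with $Y_{\eta,j} = \eta^{s_j}$, and the diagonal matrix $D \in \F^{w \times w}$ with $D_{jj} = c_{s_j}$, so that the $(\varepsilon,\eta)$ entry of $A := X D Y^{\top}$ is exactly $c(\varepsilon\eta/\alpha)$. The rows of $A$ indexed by $\varepsilon \in M$ are all zero, hence $\mathrm{rank}(A) \leq |M'| - |M|$. The structural use of $M'$ being consecutive enters here: writing $\varepsilon/\alpha = \xi^{e+zn}$ and rescaling column $j$ of $X$ by $\xi^{-e s_j}$ turns $X$ into $(\gamma^{z s_j})_{z,j}$ with $\gamma = \xi^n$. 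Since $\gcd(n,m)=1$, $\gamma$ still has order $m$, so the $\gamma^{s_j}$ are distinct and a Vandermonde calculation yields $\mathrm{rank}(X) = \min(|M'|, w)$.

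I then apply Sylvester's rank inequality $\mathrm{rank}(A) \geq \mathrm{rank}(X) + \mathrm{rank}(Y) - w$ (valid because $D$ is invertible) to obtain $\mathrm{rank}(Y) \leq w - \min(|M'|, w) + (|M'| - |M|)$. Consequently the kernel of $Y$ viewed as a linear map $\F^w \to \F^{|N|}$ has dimension at least $|M|$ when $|M'| \leq w$, and at least $w - d_N + 2$ when $|M'| > w$ (using $|M'| - |M| \leq d_N - 2$). Every nonzero $\lambda \in \ker Y$ yields a nonzero polynomial $p_\lambda(x) := \sum_j \lambda_j x^{s_j}$, supported in $S$ and vanishing on $N$; the same ``minimum distance of the $N$-vanishing code'' principle as before forces its Hamming weight to be $\geq d_N$. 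Hence $\ker Y$ is an $\F$-linear subspace of $\F^w$ with those dimensions and minimum Hamming weight $\geq d_N$, and the Singleton bound gives either $d_N \leq w - |M| + 1$, contradicting $w \leq |M| + d_N - 2$, or $d_N \leq d_N - 1$, an outright contradiction.

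The main obstacle is the Vandermonde rank computation for $X$, which crucially depends on $\gcd(n,m) = 1$, together with the bookkeeping needed to combine Sylvester's inequality with the Singleton bound across the two sub-cases $|M'| \leq w$ and $|M'| > w$. A secondary technical point is that $c_\varepsilon$ and the $p_\lambda$ live over $\F$ rather than $\Fq$, so the phrase ``weight $\geq d_N$'' must be read with respect to the scalar extension of $C_N$---a harmless replacement when $C_N$ is chosen with defining set equal to the $q$-closure of $N$, since scalar extension preserves the minimum distance.
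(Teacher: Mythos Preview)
The paper does not prove Theorem~\ref{Roos}; it is quoted from \cite{RZ} without argument, so there is no in-paper proof to compare against. Your rank-theoretic argument (Vandermonde rank of $X$, Sylvester's inequality for $A=XDY^{\top}$, then Singleton applied to $\ker Y$) is a clean, self-contained route to the Roos bound and is logically sound. It is worth noting that Roos's original proof proceeds by induction on $|M|$, peeling off an endpoint of the consecutive envelope $M'$; your direct matrix approach avoids the induction entirely at the cost of the two-case split $|M'|\le w$ versus $|M'|>w$.

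The one genuine soft spot is your closing remark about $\F$ versus $\Fq$. You correctly flag that $c_{\varepsilon}$ and the $p_{\lambda}$ have coefficients in $\F$, but the proposed fix---``scalar extension of $C_N$ when $C_N$ is chosen with defining set equal to the $q$-closure of $N$''---does not do what you need. The polynomial $p_{\lambda}$ is only guaranteed to vanish on $N$, not on the $q$-closure $\overline{N}$, so in general $p_{\lambda}\notin C_{\overline{N}}\otimes_{\Fq}\F$, and the preserved minimum distance of that scalar extension says nothing about $\mbox{wt}(p_{\lambda})$. (Concretely: $q=2$, $m=7$, $N=\{\xi\}$ gives $d(C_{\overline{N}})=3$, yet $x-\xi\in\F[x]$ vanishes on $N$ with weight $2$.) The correct resolution is that in the Roos framework $d_N$ must be read as a rank condition on $\widetilde{H}_N$ over $\F$, equivalently as a lower bound on the weight of any nonzero $\F$-polynomial of degree $<m$ vanishing on $N$; this is how Roos states it and how the BCH/HT inputs are proved. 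With that reading your proof is complete. If one insists on the paper's phrasing of $d_N$ as the minimum distance of an arbitrary $\Fq$-constacyclic code whose defining set merely contains $N$, the statement itself becomes too strong, so the issue lies in the formulation rather than in your strategy.
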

If $N$ is consecutive like in (\ref{cons zero set}), then we get the following.
\begin{cor}\cite[Corollary 1]{RZ}, \cite[Corollary 1]{R2} \label{Roos2}
Let $N, M$ and $M'$ be as in Theorem \ref{Roos}, with $N$ consecutive. Then $|M'| < |M| + |N|$ implies $d_{MN}\geq |M| + |N|$.
\end{cor}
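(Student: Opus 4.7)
The plan is to deduce Corollary \ref{Roos2} from Theorem \ref{Roos} by applying the BCH bound to $C_N$. Since $N$ is consecutive in the sense of (\ref{cons zero set}), it has the form $\{\alpha\xi^{e+zn}\ :\ 0\leq z\leq \delta-2\}$ with $|N|=\delta-1$ and $\gcd(m,n)=1$. The BCH-type bound for constacyclic codes (a degenerate case of Theorem \ref{Roos} itself, obtained by taking $M$ to be a singleton, or equivalently an application of Vandermonde non-singularity to any $|N|$ columns of the parity-check matrix) then yields
\[
d_N \geq |N| + 1.
\]

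Next, I would translate the strict inequality $|M'| < |M|+|N|$ into the form required by Theorem \ref{Roos}. Since the sizes are integers, this inequality is equivalent to $|M'| \leq |M|+|N|-1$, and combining with $d_N \geq |N|+1$ gives
\[
|M'| \;\leq\; |M|+|N|-1 \;\leq\; |M|+d_N-2,
\]
which is precisely the hypothesis of Theorem \ref{Roos} applied to the pair $(M,N)$.

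Finally, I would invoke Theorem \ref{Roos} to conclude that
\[
d_{MN} \;\geq\; |M|+d_N-1 \;\geq\; |M|+(|N|+1)-1 \;=\; |M|+|N|,
\]
which is the desired inequality. The argument is therefore essentially a two-step chaining: first strengthen $d_N$ using consecutiveness of $N$, then feed the result into Theorem \ref{Roos}. There is no genuine obstacle here; the only point requiring care is making sure that the BCH bound for constacyclic codes is indeed available for a consecutive set as defined in (\ref{cons zero set}), which is standard once one notes that $\gcd(m,n)=1$ makes the relevant $|N|\times|N|$ Vandermonde-type submatrix non-singular over $\F$.
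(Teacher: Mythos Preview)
Your argument is correct and is exactly the standard derivation of this corollary from Theorem~\ref{Roos}: use the BCH bound $d_N\geq |N|+1$ for a consecutive $N$, convert the strict inequality on $|M'|$ into the hypothesis of Theorem~\ref{Roos}, and read off the conclusion. The paper itself does not supply a proof of Corollary~\ref{Roos2}; it merely cites \cite[Corollary~1]{RZ} and \cite[Corollary~1]{R2}, where the same two-step reasoning is used, so your approach matches what the cited sources do.
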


\begin{rem}\label{Roos remark}
In particular, the case $M=\{\alpha\}$ yields the BCH bound for the associated constacyclic code (see \cite[Corollary 2]{RZ} and the original BCH bound for cyclic codes in \cite{BC} and \cite{H}). Taking $M'=M$ yields the HT bound (see \cite[Corollary 3]{RZ} and the HT bound for cyclic codes in \cite[Theorem 2]{HT}). 
\end{rem}

Another improvement to the HT bound for cyclic codes was given by van Lint and Wilson in \cite{LW}, which is known as the shift bound. We now formulate the shift bound for constacyclic codes. To do this, we need the notion of an {\it independent set}, which can be constructed over any field in a recursive way.

Let $S$ be a subset of some field $\K$ of any characteristic. One inductively defines a family of finite subsets of $\K$, called independent with respect to $S$, as follows.
\begin{enumerate}
\item $\emptyset$ is independent with respect to $S$.
\item If $A \subseteq S$ is independent with respect to $S$, then $A\cup\{b\}$ is independent with respect to $S$ for all $b\in \K \setminus S$. \label{cond:2}
\item If $A$ is independent with respect to $S$ and $c\in\K^{*}$, then $cA$ is independent with respect to $S$. \label{cond:3}
\end{enumerate}

Recall that the {\it weight} of a polynomial $f(x) \in \K[x]$, denoted by $\mbox{wt}(f)$, is the number of nonzero coefficients in $f(x)$.

\begin{thm} \cite[Theorem 11]{LW} (Shift bound)\label{shift bound}
Let $0\neq f(x)\in \K[x]$ and let $S:=\{\theta\in \K\ | \ f(\theta)=0\}$. Then $\mbox{wt}(f)\geq |A|,$ for every subset $A$ of $\K$ that is independent with respect to $S$.
\end{thm}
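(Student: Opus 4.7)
The plan is to recast the weight bound as a linear-algebraic dimension count. Write $f(x) = \sum_{k=1}^{w} c_{k} x^{e_{k}}$ with $w = \mbox{wt}(f)$, distinct nonnegative exponents $e_{k}$, and coefficients $c_{k} \in \K^{*}$. For every $a \in \K$, introduce the evaluation vector
\[
v_{a} := (a^{e_{1}}, a^{e_{2}}, \ldots, a^{e_{w}}) \in \K^{w},
\]
so that $f(a) = \langle \vec{c}, v_{a} \rangle$ where $\vec{c} := (c_{1}, \ldots, c_{w})$. The key claim I would prove, by induction on the number of construction steps of $A$, is that $\{v_{a} : a \in A\}$ is a linearly independent family in $\K^{w}$. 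Since $\dim_{\K} \K^{w} = w$, the bound $|A| \leq \mbox{wt}(f)$ follows immediately.

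The base case $A = \emptyset$ is vacuous. For a construction ending with rule~(\ref{cond:3}), so $A = cA'$, I would observe that $v_{ca'} = D\, v_{a'}$ with $D = \mbox{diag}(c^{e_{1}}, \ldots, c^{e_{w}})$; since $c \in \K^{*}$, $D$ is invertible and preserves linear independence, so the claim for $A'$ transfers to $A$. For a construction ending with rule~(\ref{cond:2}), i.e., $A = A' \cup \{b\}$ with $A' \subseteq S$ independent and $b \in \K \setminus S$, I would suppose toward a contradiction that $v_{b} = \sum_{a' \in A'} \lambda_{a'} v_{a'}$. Pairing both sides with $\vec{c}$ yields
\[
f(b) = \sum_{a' \in A'} \lambda_{a'} f(a') = 0,
\]
because each $a' \in A' \subseteq S$ is a zero of $f$. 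This contradicts $f(b) \neq 0$ (which holds because $b \notin S$), so $v_{b}$ lies outside the span of the previously assembled vectors and the linear independence persists.

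The main conceptual step is recognizing that $\mbox{wt}(f)$ is precisely the dimension of the ambient space hosting the evaluation vectors, so that the shift bound reduces to showing that the construction rules can never produce linearly dependent $v_{a}$'s. A secondary care concerns the induction parameter: rule~(\ref{cond:3}) leaves $|A|$ unchanged, so one should induct on the length of a construction of $A$ rather than on $|A|$. Once these two points are settled, each inductive case is a one-line linear-algebra calculation, and the whole argument reads as a support-restricted Vandermonde-type nondegeneracy statement driven by the recursive construction of the independent set.
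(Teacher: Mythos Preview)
The paper does not supply its own proof of Theorem~\ref{shift bound}; it is quoted from \cite{LW} as a cited result, so there is no in-paper argument to compare against. Your proof is correct and is in fact the original van Lint--Wilson argument: write $f$ with support $\{e_1,\ldots,e_w\}$, form the evaluation vectors $v_a=(a^{e_1},\ldots,a^{e_w})\in\K^w$, and show by induction on the length of a construction of $A$ that the family $\{v_a:a\in A\}$ is linearly independent. The two inductive steps you describe---the invertible diagonal scaling for the multiplicative rule, and the pairing $f(b)=\langle\vec{c},v_b\rangle\neq 0$ versus $\langle\vec{c},v_{a'}\rangle=0$ for $a'\in A'\subseteq S$ for the adjunction rule---are precisely the mechanism of \cite[Theorem~11]{LW}.
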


The minimum distance bound for a given $\lambda$-constacyclic code follows by considering the weights of its codewords $c(x)\in C$ and the independent sets with respect to subsets of its defining set $L$. Observe that, in this case, the universe of the independent sets is $\Omega$, not $\F$, because all of the possible roots of the codewords are contained in $\Omega$. Moreover, we choose $b$ from $\Omega \setminus P$ in Condition \ref{cond:2}) above, where $P\subseteq L$, and $c$ in Condition \ref{cond:3}) is of the form $\xi^k\in\F^{*}$, for some $0\leq k\leq m-1$. 

\begin{rem}\label{shift remark}
In particular, $A=\{\alpha\xi^{e+zn}\ :\ 0\leq z\leq \delta-1\}$ is independent with respect to the consecutive set $E$ in (\ref{cons zero set}), which gives the BCH bound for $C_E$. Let $D := \{\alpha\xi^{e + z  n_1 + y  n_2} : 0\leq z \leq \delta-2, 0\leq y\leq s\}$, for integers $b \geq 0$, $\delta \geq 2$ and positive integers $s, n_1$ and $n_2$ such that $\gcd(m, n_1) = 1$ and $\gcd(m, n_2) < \delta$. Then, for any fixed $z\in \{0,\ldots,\delta-2\}$, $A_z:=\{\alpha\xi^{e + z  n_1} :	0\leq z \leq\delta-2\}\cup\{\alpha\xi^{e + z  n_1 + y  n_2} : 0\leq y \leq s+1\}$ is independent with respect to $D$ and we get the HT bound for $C_D$.
\end{rem}

\subsection{Spectral theory of quasi-twisted codes}
A linear code $C\subseteq\Fq^{m\ell}$ is called {\it $\lambda$-quasi-twisted} ($\lambda$-QT) of index $\ell$ if it is invariant under the $\lambda$-constashift of codewords by $\ell$ positions with $\ell$ being the smallest positive integer with this property. In particular, if $\ell=1$, then $C$ is $\lambda$-constacyclic. If $\lambda = 1$ or $q=2$, then $C$ is QC of index $\ell$. For a codeword $\vec{c}\in C$, seen as an $m \times \ell$ array
\begin{equation}\label{array}
\vec{c}=\left(
  \begin{array}{ccc}
    c_{00} & \ldots & c_{0,\ell-1} \\
    \vdots & \vdots  & \vdots \\
    c_{m-1,0} & \ldots & c_{m-1,\ell-1} \\
  \end{array}
\right),
\end{equation} 
being invariant under $\lambda$-constashift by $\ell$ units in $\Fq^{m\ell}$ corresponds to being closed under row $\lambda$-constashift in $\Fq^{m\times\ell}$.

To an element $\vec{c}\in \Fq^{m\times \ell} \simeq \Fq^{m\ell}$ in (\ref{array}), we associate an element of $R^\ell$ (cf. (\ref{identification-1}))
\begin{equation} \label{associate-1}
\vec{c}(x):=(c_0(x),c_1(x),\ldots ,c_{\ell-1}(x)) \in R^\ell ,
\end{equation}
where, for each $0\leq j \leq \ell-1$, 
\begin{equation*}
c_j(x):= c_{0,j}+c_{1,j}x+c_{2,j}x^2+\cdots + c_{m-1,j}x^{m-1} \in R .
\end{equation*} 
The isomorphism $\phi$ in (\ref{identification-1}) extends naturally to
\begin{equation}\begin{array}{rll} \label{identification-2}
\Phi: \F_q^{m\ell} & \longrightarrow & R^\ell  \\
\vec{c} \hspace{5pt}& \longmapsto & \vec{c}(x) .
\end{array}\end{equation}
The  row $\lambda$-constashift in $\Fq^{m\times\ell}$ corresponds to componentwise multiplication by $x$ in $R^\ell$. The map $\Phi$ above is, therefore, an $R$-module isomorphism and a $\lambda$-QT code $C\subseteq \F_q^{m\ell}$ of index $\ell$ can be viewed as an $R$-submodule of $R^\ell$.

Lally and Fitzpatrick proved in~\cite{LF} that every QC code has a polynomial generator in the form of a reduced matrix. We provide an easy adaptation of their findings for QT codes. 

Consider the ring homomorphism
\begin{eqnarray}\label{eq:Psi}
\Psi \ :\ \Fq[x]^{\ell} &\longrightarrow& R^{\ell} \\\nonumber
(f_0(x),\ldots, f_{\ell-1}(x))  &\longmapsto& (f_0(x)+I,\ldots ,f_{\ell-1}(x)+I).
\end{eqnarray}
Let each $\vec{e}_j$ denote the standard basis vector of length $\ell$ with $1$ at the $j^{th}$ coordinate and $0$ elsewhere. Given a $\lambda$-QT code $C\subseteq R^{\ell}$, the preimage $\widetilde{C}$ of $C$ in $\Fq[x]^{\ell}$ is an $\Fq[x]$-submodule containing $\widetilde{K} =\{(x^m-\lambda)\vec{e}_j : 0\leq j \leq \ell-1\}$. From here on, the tilde indicates structures over $\Fq[x]$.

Since $\widetilde{C}$ is a submodule of the finitely generated free module $F_q[x]^{\ell}$ over the principal ideal domain $\Fq[x]$ and contains $\widetilde{K}$, it has a generating set of the form $$\{\vec{u}_1,\ldots,\vec{u}_p, (x^m-\lambda)\vec{e}_0,\ldots,(x^m-\lambda)\vec{e}_{\ell-1}\},$$  where $p\geq 1$ is an integer and $\vec{u}_b = (u_{b,0}(x), \ldots, u_{b,\ell-1}(x)) \in \Fq[x]^{\ell}$, for each $b \in \{1,\ldots,p\}$. Hence, the rows of 

$$\mathcal{G}=\left(\begin{array}{ccc}
    u_{1,0}(x) & \ldots & u_{1,\ell-1}(x) \\
    \vdots & \vdots & \vdots \\
    u_{p,0}(x) & \ldots & u_{p,\ell-1}(x) \\
     x^m-\lambda & \ldots & 0 \\
    \vdots & \ddots & \vdots \\
    0 & \ldots & x^m-\lambda \\
  \end{array}
\right)$$
generate $\widetilde{C}$. We triangularise $\mathcal{G}$ by elementary row operations to obtain another equivalent generating set from the rows of an upper-triangular $\ell \times \ell$ matrix with entries in $\Fq[x]$ 
\begin{equation}\label{PGM}
\widetilde{G}(x)=\left(\begin{array}{cccc}
    g_{0,0}(x) & g_{0,1}(x) & \ldots & g_{0,\ell-1}(x) \\
    0 & g_{1,1}(x) & \ldots & g_{1,\ell-1}(x) \\
    \vdots & \vdots & \ddots & \vdots \\
    0 & 0 &\ldots & g_{\ell-1,\ell-1}(x)\\
  \end{array}
\right),
\end{equation}
where $\widetilde{G}(x)$ satisfies (see \cite[Theorem 2.1]{LF}):
\begin{enumerate}
    \item $g_{i,j}(x)=0$ for all $0\leq j < i \leq \ell-1$.
    \item $\deg(g_{i,j}(x)) < $ $\deg(g_{j,j}(x))$ for all $i <j$. \label{cond:2nd}
    \item $g_{j,j}(x) \mid (x^m-\lambda)$ for all $0\leq j \leq \ell-1$.
    \item If  $g_{j,j}(x) = (x^m-\lambda)$, then $g_{i,j}(x) =0$ for all $i\neq j$.
\end{enumerate}
Note that $\widetilde{G}(x)$ has nonzero rows and each nonzero element of $\widetilde{C}$ can be expressed as $(0,\ldots,0,c_j(x),\ldots,c_{\ell-1}(x))$, where $j\geq 0$, $ c_j(x)\neq 0$ and $g_{j,j}(x)\mid c_j(x)$. Moreover, Condition \ref{cond:2nd}) implies that the rows of $\widetilde{G}(x)$ is a reduced basis of $\widetilde{C}$, which is uniquely defined, up to multiplication by constants, with monic diagonal elements.

Let $G(x)$ be the matrix with the rows of $\widetilde{G}(x)$ under the image of $\Psi$ in (\ref{eq:Psi}). Then, the rows of $G(x)$ are an $R$-generating set for $C$. We say that $C$, generated as an $R$-submodule, is an $r$-generator QT code if $G(x)$ has $r$ (nonzero) rows. The $\Fq$-dimension of $C$, as shown in~\cite[Corollary 2.4]{LF}, is 
\begin{equation}\label{dimension}
m\ell-\sum_{j=0}^{\ell-1} \deg(g_{j,j}(x))=\sum_{j=0}^{\ell-1}\left[m -\deg(g_{j,j}(x))\right].
\end{equation}

In \cite{ST}, Semenov and Trifonov use the polynomial matrix $\widetilde{G}(x)$ in (\ref{PGM}) to develop a spectral theory for QC codes. This gives rise to a BCH-like minimum distance bound. Their bound is improved by Zeh and Ling in \cite{ZL} by using the HT bound (\cite{HT}). We translate their results from QC to QT codes. 

Given a $\lambda$-QT code $C\subseteq R^{\ell}$, let the associated $\ell \times \ell$ upper triangular matrix $\widetilde{G}(x)$ be as in (\ref{PGM}) with entries in $\Fq[x]$. The {\it determinant} of $\widetilde{G}(x)$ is $$\det(\widetilde{G}(x)):=\prod_{j=0}^{\ell-1}g_{j,j}(x)$$
and an {\it eigenvalue} $\beta$ of $C$ is a root of $\det(\widetilde{G}(x))$. Note that, since $g_{j,j}(x)\mid x^m-\lambda$, for each $0\leq j\leq \ell-1$, all eigenvalues are elements of $\Omega$, \ie, $\beta=\alpha\xi^k$ for some $k\in\{0,\ldots,m-1\}$. The {\it algebraic multiplicity} of $\beta$ is the largest integer $a$ such that $(x-\beta)^a\mid \det(\widetilde{G}(x))$. The {\it geometric multiplicity} of $\beta$ is the dimension of the null space of $\widetilde{G}(\beta)$. This null space, denoted by $\mathcal{V}_{\beta}$, is called the {\it eigenspace} of $\beta$. In other words, 
\[
\mathcal{V}_{\beta}:=\big\{\vec{v}\in\F^{\ell} : \widetilde{G}(\beta)\vec{v}^{\top}=\vec{0}_{\ell}\big\},
\]
where $\F$ is the splitting field of $x^m-\lambda$, as before. It was shown in \cite{ST} that, for a given QC code and the associated $\widetilde{G}(x) \in (\F_q[x])^{\ell\times\ell}$, the algebraic multiplicity $a$ of an eigenvalue $\beta$ is equal to its geometric multiplicity $\dim_{\F}(\mathcal{V}_{\beta})$. We state the QT analogue of this result without the proof, since it can be shown in exactly the same way.
\begin{lem}\cite[Lemma 1]{ST}\label{multiplicity lemma}
The algebraic multiplicity of any eigenvalue of a $\lambda$-QT code $C$ is equal to its geometric multiplicity.
\end{lem}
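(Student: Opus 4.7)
The plan is to adapt the Semenov-Trifonov argument for QC codes, replacing $x^m-1$ by $x^m-\lambda$ throughout. The main tool is the Smith Normal Form (SNF) of $\widetilde{G}(x)$ over the principal ideal domain $\F[x]$, combined with the module-theoretic constraint $\widetilde{K}\subseteq\widetilde{C}$, which forces $(x^m-\lambda)\vec{e}_j$ to lie in the row module of $\widetilde{G}(x)$ for each $j$.

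First I would write $\widetilde{G}(x)=P(x)D(x)Q(x)$, where $P(x), Q(x)\in\mathrm{GL}_{\ell}(\F[x])$ (so their determinants are nonzero constants in $\F$) and $D(x)=\mathrm{diag}(d_0(x),\ldots,d_{\ell-1}(x))$ with $d_0\mid d_1\mid\cdots\mid d_{\ell-1}$. For any $\beta\in\F$, both $P(\beta)$ and $Q(\beta)$ remain invertible over $\F$, so $\widetilde{G}(\beta)$ and $D(\beta)$ have the same rank. Therefore $\dim_{\F}\mathcal{V}_{\beta}=|\{j:d_j(\beta)=0\}|$, while the algebraic multiplicity of $\beta$ equals $\sum_{j}\nu_{\beta}(d_j)$, where $\nu_{\beta}(f)$ denotes the multiplicity of $\beta$ as a root of $f$.

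The hard part is showing that each $d_j(x)$ divides $x^m-\lambda$. Since the rows of $\widetilde{G}(x)$ span $\widetilde{C}$ over $\F_q[x]$ and $(x^m-\lambda)\vec{e}_j\in\widetilde{C}$, there exists $A(x)\in\F_q[x]^{\ell\times\ell}$ satisfying $A(x)\widetilde{G}(x)=(x^m-\lambda)I_{\ell}$. Substituting the SNF and passing to the fraction field yields $Q(x)A(x)P(x)=(x^m-\lambda)D(x)^{-1}$, whose left-hand side is a matrix over $\F[x]$. The diagonal of the right-hand side then forces each $(x^m-\lambda)/d_j(x)$ to lie in $\F[x]$, i.e., $d_j(x)\mid(x^m-\lambda)$. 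This step cannot be extracted from the upper-triangular shape of $\widetilde{G}(x)$ alone; it is precisely where the containment $\widetilde{K}\subseteq\widetilde{C}$ is essential.

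Finally, since $\gcd(m,q)=1$, the polynomial $x^m-\lambda$ has distinct roots in $\F$, so each $d_j$ is squarefree and $\nu_{\beta}(d_j)\in\{0,1\}$. Consequently $\sum_{j}\nu_{\beta}(d_j)=|\{j:d_j(\beta)=0\}|$, and the algebraic and geometric multiplicities of any eigenvalue $\beta$ coincide.
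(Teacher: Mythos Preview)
Your argument is correct. The Smith normal form computation is sound: unimodularity of $P,Q$ over $\F[x]$ guarantees $P(\beta),Q(\beta)\in\mathrm{GL}_\ell(\F)$, so the geometric multiplicity is $|\{j:d_j(\beta)=0\}|$; and the identity $QAP=(x^m-\lambda)D^{-1}$, obtained from $A\widetilde{G}=(x^m-\lambda)I_\ell$, indeed forces $(x^m-\lambda)/d_j(x)\in\F[x]$ because the right-hand side is diagonal while the left-hand side has polynomial entries. Separability of $x^m-\lambda$ (from $\gcd(m,q)=1$ and $\lambda\neq 0$) then finishes the count.

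The paper itself omits the proof entirely, stating only that it ``can be shown in exactly the same way'' as \cite[Lemma~1]{ST}; your write-up is precisely such an adaptation, so there is nothing to compare against beyond the cited reference. One minor remark: your aside that the divisibility $d_j\mid x^m-\lambda$ ``cannot be extracted from the upper-triangular shape of $\widetilde{G}(x)$ alone'' is a fair comment about the SNF diagonal entries $d_j$, but note that the paper's triangular form already records $g_{j,j}\mid x^m-\lambda$ explicitly (Condition~3 after~(\ref{PGM})), so the \emph{algebraic} multiplicity equals $|\{j:g_{j,j}(\beta)=0\}|$ directly; it is only for the \emph{geometric} side that one needs the SNF (or an equivalent device), since an upper-triangular matrix with $a$ diagonal zeros can have rank strictly larger than $\ell-a$.
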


From this point on, we let $\overline{\Omega}\subseteq \Omega$ denote the nonempty set of all eigenvalues of $C$ such that $|\overline{\Omega}|=t>0$. Note that $\overline{\Omega}=\emptyset$ if and only if the diagonal elements $g_{j,j}(x)$ in $\widetilde{G}(x)$ are constant and $C$ is the trivial full space code. Choose an arbitrary eigenvalue $\beta_i\in\overline{\Omega}$ with multiplicity $n_i$ for some $i \in\{1,\ldots,t\}$. Let $\{\vec{v}_{i,0},\ldots,\vec{v}_{i,n_i-1}\}$ be a basis for the corresponding eigenspace $\mathcal{V}_i$. Consider the matrix  
\begin{equation}\label{Eigenspace} 
 V_i:=\begin{pmatrix}
\vec{v}_{i,0} \\
\vdots\\
\vec{v}_{i,n_i-1} 
\end{pmatrix}
=
\begin{pmatrix}
v_{i,0,0}&\ldots&v_{i,0,\ell-1} \\
\vdots & \vdots & \vdots\\
v_{i,n_i-1,0}&\ldots&v_{i,n_i-1,\ell-1}
 \end{pmatrix},
\end{equation} 
having the basis elements as its rows. We let
\[
H_i:=(1, \beta_i,\ldots,\beta_i^{m-1})\otimes V_i \mbox{ and }
\]
\begin{equation}\label{parity check matrix} 
H:=\begin{pmatrix}
H_1 \\
\vdots\\
H_t 
\end{pmatrix}=\begin{pmatrix}
V_1&\beta_1 V_1 & \ldots &(\beta_1)^{m-1} V_1 \\
\vdots &\vdots & \vdots & \vdots\\
V_t & \beta_t V_t & \ldots & (\beta_t)^{m-1} V_t
\end{pmatrix}.
\end{equation} 
Observe that $H$ has $n:=\sum_{i=1}^t n_i$ rows. By Lemma \ref{multiplicity lemma}, we have $n=\sum_{j=0}^{\ell-1}\mbox{deg}(g_{j,j}(x))$. To prove Lemma~\ref{rank lemma} below, it remains to show the linear independence of these $n$ rows, which was already shown in~\cite[Lemma 2]{ST}.
\begin{lem}\label{rank lemma}
The matrix $H$ in (\ref{parity check matrix}) has rank $m\ell -\dim_{\Fq}(C)$.
\end{lem}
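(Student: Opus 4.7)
The plan is to establish two things: first, that the row count $n=\sum_{i=1}^{t}n_i$ of $H$ already equals $m\ell-\dim_{\Fq}(C)$, and second, that the $n$ rows are $\F$-linearly independent, so that the rank attains this value.

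For the counting step I will combine Lemma \ref{multiplicity lemma} with the dimension formula in equation (\ref{dimension}). By Lemma \ref{multiplicity lemma}, the geometric multiplicity $n_i=\dim_{\F}(\mathcal{V}_i)$ of each eigenvalue $\beta_i$ equals its algebraic multiplicity in $\det(\widetilde{G}(x))=\prod_{j=0}^{\ell-1}g_{j,j}(x)$. Summing over the distinct roots $\beta_1,\dots,\beta_t$ of $\det(\widetilde{G}(x))$ gives
\[
n=\sum_{i=1}^{t}n_i=\sum_{j=0}^{\ell-1}\deg(g_{j,j}(x))=m\ell-\dim_{\Fq}(C),
\]
where the last equality is exactly (\ref{dimension}). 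So it suffices to exhibit that these $n$ rows are linearly independent.

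For independence, I would set up a generic vanishing combination. Index the rows of $H$ by pairs $(i,j)$ with $1\le i\le t$ and $0\le j\le n_i-1$, so the $(i,j)$-th row is $(\vec{v}_{i,j},\beta_i\vec{v}_{i,j},\dots,\beta_i^{m-1}\vec{v}_{i,j})$. Suppose $\sum_{i,j}c_{i,j}\cdot(\text{row}(i,j))=\vec{0}$, and set $\vec{w}_i:=\sum_{j=0}^{n_i-1}c_{i,j}\vec{v}_{i,j}\in\F^{\ell}$. Reading off the $\ell$-block corresponding to the power $\beta^{k}$ gives the system
\[
\sum_{i=1}^{t}\beta_i^{k}\,\vec{w}_i=\vec{0}_{\ell},\qquad 0\le k\le m-1.
\]
Since the $\beta_i$ are pairwise distinct elements of $\Omega$ and $t\le m$, the $t\times t$ submatrix obtained from the exponents $k=0,1,\dots,t-1$ is a nonsingular Vandermonde matrix, forcing $\vec{w}_i=\vec{0}_{\ell}$ for every $i$. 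Since $\vec{v}_{i,0},\dots,\vec{v}_{i,n_i-1}$ are an $\F$-basis of $\mathcal{V}_i$, the relation $\vec{w}_i=\vec{0}$ yields $c_{i,j}=0$ for all $j$, completing the independence argument.

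The only subtlety is the coordination between the two independence sources: the Vandermonde independence across eigenvalues and the basis-independence within each eigenspace. Once one writes the ansatz in the form $\vec{w}_i=\sum_j c_{i,j}\vec{v}_{i,j}$, these two pieces decouple cleanly, and no further work is needed. Everything else (row count, membership of eigenvalues in $\Omega$, applicability of Lemma \ref{multiplicity lemma}) is either direct from the definitions or was handled earlier in the section.
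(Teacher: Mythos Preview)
Your proof is correct and follows the same route as the paper: the counting step via Lemma~\ref{multiplicity lemma} and the dimension formula~(\ref{dimension}) is identical, and for the linear-independence step the paper simply cites \cite[Lemma~2]{ST}, whereas you supply the standard Vandermonde-plus-basis argument that underlies that citation. Nothing is missing or different in spirit.
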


It is immediate to confirm that $H \vec{c}^{\top}=\vec{0}_n$ for any codeword $\vec{c}\in C$. Together with Lemma \ref{rank lemma}, we obtain the following easily.

\begin{prop}\cite[Theorem 1]{ST}
The $n\times m\ell$ matrix $H$ in (\ref{parity check matrix}) is a parity-check matrix for $C$.
\end{prop}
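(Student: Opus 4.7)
The plan is to verify that $C$ coincides with $\ker H$ by combining two ingredients: a direct check that every codeword lies in $\ker H$, and the rank statement already recorded in Lemma \ref{rank lemma}, which gives $\dim_{\Fq}\ker H = m\ell - \mathrm{rank}(H) = \dim_{\Fq}(C)$. Once the inclusion $C\subseteq \ker H$ is in place, this dimension count forces equality and the proposition follows.

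For the inclusion I would take an arbitrary codeword $\vec{c}\in C$ with polynomial realization $\vec{c}(x)=(c_0(x),\ldots,c_{\ell-1}(x))\in R^\ell$ under $\Phi$, and lift it to $\vec{\tilde c}(x)=\vec{a}(x)\widetilde{G}(x)\in \Fq[x]^\ell$ for some $\vec{a}(x)\in \Fq[x]^\ell$; this is possible because the rows of $\widetilde{G}(x)$ generate the preimage $\widetilde{C}$, with the divisibility $g_{j,j}(x)\mid x^m-\lambda$ absorbing the extra generators $(x^m-\lambda)\vec{e}_j$. The crucial observation is then the block identity $H_i\vec{c}^\top = V_i\,\vec{c}(\beta_i)^\top$, where $\vec{c}(\beta_i):=(c_0(\beta_i),\ldots,c_{\ell-1}(\beta_i))\in \F^\ell$; this follows immediately from $H_i=(V_i,\beta_iV_i,\ldots,\beta_i^{m-1}V_i)$ after grouping the $\ell$-blocks. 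Since $\beta_i^m=\lambda$, evaluation at $\beta_i$ descends from $\Fq[x]$ to $R$, so $\vec{c}(\beta_i)=\vec{\tilde c}(\beta_i)=\vec{a}(\beta_i)\widetilde{G}(\beta_i)$. Pairing with a basis row $\vec{v}_{i,k}$ of $V_i$ gives $\vec{v}_{i,k}\,\vec{c}(\beta_i)^\top = \vec{v}_{i,k}\,\widetilde{G}(\beta_i)^\top \vec{a}(\beta_i)^\top = \bigl(\widetilde{G}(\beta_i)\vec{v}_{i,k}^\top\bigr)^\top\vec{a}(\beta_i)^\top=0$, because $\vec{v}_{i,k}\in\mathcal{V}_i$ lies in the null space of $\widetilde{G}(\beta_i)$ by construction. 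Stacking over all $i$ and $k$ yields $H\vec{c}^\top=\vec{0}_n$, establishing $C\subseteq \ker H$.

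With the inclusion in hand, Lemma \ref{rank lemma} finishes the proof: $\dim_{\Fq}\ker H = m\ell - \bigl(m\ell - \dim_{\Fq}(C)\bigr)=\dim_{\Fq}(C)$, so $C=\ker H$ and $H$ is a parity-check matrix. The conceptually interesting point is the tensor identity $H_i\vec{c}^\top=V_i\vec{c}(\beta_i)^\top$, which converts the abstract spectral condition encoded by $\widetilde{G}(\beta_i)$ into an orthogonality condition against the eigenspace basis. The main obstacle is not difficulty but notational bookkeeping: one must track codewords simultaneously as long vectors in $\Fq^{m\ell}$, as $m\times\ell$ arrays as in (\ref{array}), and as polynomial tuples in $R^\ell$ or $\Fq[x]^\ell$, while keeping the row/column conventions consistent under the tensor product defining $H$. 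The substantive content, namely the linear independence of the $n$ rows of $H$, has already been dispatched inside Lemma \ref{rank lemma} via \cite[Lemma 2]{ST}, so the present proposition reduces to the clean packaging step sketched above.
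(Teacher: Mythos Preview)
Your proof is correct and follows exactly the paper's approach: the paper states that ``it is immediate to confirm that $H\vec{c}^{\top}=\vec{0}_n$ for any codeword $\vec{c}\in C$'' and then invokes Lemma~\ref{rank lemma}, while you simply spell out that ``immediate'' verification via the identity $H_i\vec{c}^{\top}=V_i\,\vec{c}(\beta_i)^{\top}$ and the null-space property of the $\vec{v}_{i,k}$. The only point worth tightening is that $H$ has entries in $\F$ rather than $\Fq$, so the rank--nullity step is really $\dim_{\Fq}C\leq \dim_{\Fq}\ker_{\Fq}H\leq \dim_{\F}\ker_{\F}H=m\ell-n=\dim_{\Fq}C$, forcing equality throughout; this is implicit in both your argument and the paper's.
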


\begin{rem}
Note that if $\overline{\Omega}=\emptyset$, then the construction of $H$ in (\ref{parity check matrix}) is impossible. Hence, we have assumed $\overline{\Omega}\neq\emptyset$ and we can always say $H=\vec{0}_{m\ell}$ if $C=\Fq^{m\ell}$. The other extreme case is when $\overline{\Omega}=\Omega$. By using Lemma \ref{rank lemma} above, one can easily deduce that a given QT code $C=\{\vec{0}_{m\ell}\}$ if and only if $\overline{\Omega}=\Omega$, each $\mathcal{V}_i=\F^{\ell}$ (equivalently, each $V_i=I_{\ell}$, where $I_{\ell}$ denotes the $\ell\times\ell$ identity matrix) and $n=m\ell$ so that we obtain $H=I_{m\ell}$. On the other hand, $\overline{\Omega}=\Omega$ whenever $x^m-\lambda \mid \mbox{det}(\widetilde{G}(x))$, but $C$ is nontrivial unless each eigenvalue in $\Omega$ has multiplicity $\ell$.
\end{rem}

\begin{defn}\label{eigencode}
Let $\mathcal{V}\subseteq \F^\ell$ be an eigenspace. We define the {\it eigencode} corresponding to $\mathcal{V}$ by
$$\mathbb{C}(\mathcal{V})=\mathbb{C}:=\left\{\vec{u}\in \Fq^\ell\ : \ \sum_{j=0}^{\ell-1}{v_ju_j}=0, \forall \vec{v} \in \mathcal{V}\right\}.$$
In case we have $\mathbb{C}=\{\vec{0}_{\ell}\}$, then it is assumed that $d(\mathbb{C})=\infty$.
\end{defn}

The BCH-like minimum distance bound of Semenov and Trifonov for a given QC code in ~\cite[Theorem 2]{ST} is expressed in terms of the size of a consecutive subset of eigenvalues in $\overline{\Omega}$ and the minimum distance of the common eigencode related to this consecutive subset. Zeh and Ling generalized their approach and derived an HT-like bound in~\cite[Theorem 1]{ZL} without using the parity-check matrix in their proof. The eigencode, however, is still needed. In the next section we will prove the analogues of these bounds for QT codes in terms of the Roos and shift bounds. 

\section{Spectral Bounds for QT Codes} \label{bounds section}
First, we establish a general spectral bound on the minimum distance of a given QT code. Let $C\subseteq\Fq^{m\ell}$ be a $\lambda$-QT code of index $\ell$ with nonempty eigenvalue set $\overline{\Omega}\varsubsetneqq\Omega$. Let $P\subseteq\overline{\Omega}$ be a nonempty subset of eigenvalues such that $P=\{\alpha\xi^{u_1}, \alpha\xi^{u_2},\ldots,\alpha\xi^{u_r}\}$, where $0<r\leq|\overline{\Omega}|$.  We define 
\begin{equation}\label{pmatrix}
\widetilde{H}_P:=\begin{pmatrix}
1&\alpha\xi^{u_1}&(\alpha\xi^{u_1})^2&\ldots&(\alpha\xi^{u_1})^{m-1}\\
\vdots & \vdots & \vdots & \vdots & \vdots \\
1&\alpha\xi^{u_r}&(\alpha\xi^{u_r})^2&\ldots&(\alpha\xi^{u_r})^{m-1}
\end{pmatrix}.
\end{equation}
Let $d_P$ be a nonnegative integer such that any $\lambda$-constacyclic code $C_P\subseteq\F_q^m$, whose defining set contains $P$, has a minimum distance at least $d_P$. We have $\widetilde{H}_P\vec{c}_P^{\top}=\vec{0}_r$, for any $\vec{c}_P\in C_P$. In particular, if $P$ is equal to the defining set of $C_P$, then $\widetilde{H}_P$ is a parity-check matrix for $C_P$.

Let $\mathcal{V}_P$ denote the common eigenspace of the eigenvalues in $P$ and let $V_P$ be the matrix, say of size $t\times\ell$, consisting of a basis for $\mathcal{V}_P$ (cf. (\ref{Eigenspace})). If we set $\widehat{H}_P =\widetilde{H}_P \otimes V_P $, then $\widehat{H}_P \vec{c}^{\top}=\vec{0}_{m\ell}$, for all $\vec{c}\in C$. In other words, $\widehat{H}_P$ is a submatrix of $H$ in (\ref{parity check matrix}) if $\mathcal{V}_P\neq\{\vec{0}_{\ell}\}$. If $\mathcal{V}_P=\{\vec{0}_{\ell}\}$, then $\widehat{H}_P$ does not exist. We first handle this case separately so that the bound is valid even if we have $\mathcal{V}_P=\{\vec{0}_{\ell}\}$, before the cases where we can use $\widehat{H}_P$ in the proof.

In the rest, we consider the quantity $\min(d_P, d(\mathbb{C}_P))$, where $\mathbb{C}_P$ is the eigencode corresponding to $\mathcal{V}_P$. We have assumed $P\neq\emptyset$ so that $\widetilde{H}_P$ is defined, and we also have $P\neq\Omega$ as $P\subseteq\overline{\Omega}\varsubsetneqq\Omega$ so that $d_P$ is well-defined. If $|P|\geq 1$, then the BCH bound implies $d_P\geq 2$. On the other hand, if $\mathcal{V}_P=\{\vec{0}_{\ell}\}$, then $\mathbb{C}_P=\Fq^{\ell}$ and $d(\mathbb{C}_P)=1$. Hence, $\min(d_P, d(\mathbb{C}_P))=1$ only if $d(\mathbb{C}_P)=1$ (including the case $\mathcal{V}_P=\{\vec{0}_{\ell}\}$), where $d(C)\geq 1$ holds for any nonzero QT code $C$.

Now let $\emptyset\neq P\subseteq\overline{\Omega}\varsubsetneqq\Omega$ and $d(\mathbb{C}_P)\geq 2$. Assume that there exists a codeword $\vec{c}\in C$ of weight $\omega$ such that $0 < \omega < \min(d_P, d(\mathbb{C}_P))$. For each $0 \leq k \leq m-1$, let $\vec{c}_k = (c_{k,0}, . . . , c_{k,\ell-1})$ be the $k^{th}$ row of the codeword $\vec{c}$ given as in (\ref{array}) and we set $\vec{s}_k := V_P\vec{c}_k^{\top}$. Since $d(\mathbb{C}_P) > \omega$, we have $\vec{c}_k \notin \mathbb{C}_P$ and therefore $ \vec{s}_k=V_P\vec{c}_k^{\top}\neq \vec{0}_t$, for all $\vec{c}_k \neq \vec{0}_{\ell}$, $k\in\{0,\ldots,m-1\}$. Hence, $0 < \lvert\{\vec{s}_k : \vec{s}_k \neq \vec{0}_t \} \rvert \leq \omega < \min(d_P, d(\mathbb{C}_P))$. Let $S := [\vec{s}_0\ \vec{s}_1 \cdots \vec{s}_{m-1}]$. Then $\widetilde{H}_PS^{\top} = 0$, which implies that the rows of the matrix $S$ lies in the right kernel of $\widetilde{H}_P$. But this is a contradiction since any row of $S$ has weight at most $\omega<d_P$, showing the following.

\begin{thm}\label{main thm}
Let $C\subseteq R^\ell$ be a $\lambda$-QT code of index $\ell$ with nonempty eigenvalue set $\overline{\Omega}\varsubsetneqq\Omega$. Let $P\subseteq\overline{\Omega}$ be a nonempty subset of eigenvalues and let $C_P\subseteq\F_q^m$ be any $\lambda$-constacyclic code with defining set $L\supseteq P$ and minimum distance at least $d_P$. We define $\mathcal{V}_P:=\bigcap_{\beta\in P}\mathcal{V}_{\beta}$ as the common eigenspace of the eigenvalues in $P$ and let $\mathbb{C}_P$ denote the eigencode corresponding to $\mathcal{V}_P$. Then, 
\begin{equation}\label{gen spectral}
d(C) \geq \min \left\{ d_P, d(\mathbb{C}_P) \right\}.
\end{equation}
\end{thm}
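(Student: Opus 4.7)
The plan is to argue by contradiction, using the matrix $\widehat{H}_P = \widetilde{H}_P \otimes V_P$ (which is a submatrix of $H$) that annihilates every codeword of $C$. First I would dispose of the trivial cases: if $\mathcal{V}_P = \{\vec{0}_\ell\}$ then $\mathbb{C}_P = \Fq^\ell$ and $d(\mathbb{C}_P) = 1$, so $\min(d_P, d(\mathbb{C}_P)) = 1 \leq d(C)$ holds for any nonzero code; the same conclusion holds whenever $d(\mathbb{C}_P) = 1$. Hence I may assume $d(\mathbb{C}_P) \geq 2$ and $\mathcal{V}_P \neq \{\vec{0}_\ell\}$, so that $V_P$ is a genuine $t \times \ell$ matrix over $\F$ and $\widehat{H}_P$ is well-defined.

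Next, I would suppose toward a contradiction that some codeword $\vec{c} \in C$ has weight $\omega$ with $0 < \omega < \min(d_P, d(\mathbb{C}_P))$. Writing $\vec{c}$ as the $m \times \ell$ array in (\ref{array}) with rows $\vec{c}_0,\ldots,\vec{c}_{m-1}$, I form $\vec{s}_k := V_P \vec{c}_k^\top \in \F^t$ and assemble these into the $t \times m$ matrix $S := [\vec{s}_0\ \vec{s}_1\ \cdots\ \vec{s}_{m-1}]$. The key observation is that whenever $\vec{c}_k \neq \vec{0}_\ell$, the bound $\text{wt}(\vec{c}_k) \leq \omega < d(\mathbb{C}_P)$ forces $\vec{c}_k \notin \mathbb{C}_P$, and therefore $\vec{s}_k \neq \vec{0}_t$. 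Since at most $\omega$ of the rows $\vec{c}_k$ can be nonzero, $S$ has at most $\omega$ nonzero columns, so every row of $S$ has Hamming weight at most $\omega$.

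Unpacking the identity $\widehat{H}_P \vec{c}^\top = \vec{0}$ via the Kronecker-product structure then yields $\widetilde{H}_P S^\top = 0$, so every row of $S$ lies in the right kernel of $\widetilde{H}_P$. A nonzero vector $(s_0,\ldots,s_{m-1}) \in \F^m$ in that kernel corresponds to a nonzero polynomial $\sum_k s_k x^k$ vanishing on $P$, that is, to a nonzero codeword of a $\lambda$-constacyclic code over $\F$ whose defining set contains $P$; such a code has minimum distance at least $d_P$. Each row of $S$ thus has weight at most $\omega < d_P$, forcing every row of $S$ to vanish, so $S = 0$. Then all $\vec{s}_k = \vec{0}_t$, which by the key observation forces every $\vec{c}_k = \vec{0}_\ell$ and hence $\vec{c} = \vec{0}_{m\ell}$, contradicting $\omega > 0$.

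The delicate point is that the hypothesis on $d_P$ is phrased for constacyclic codes over $\Fq$, whereas the rows of $S$ live in $\F^m$. This is not a real obstruction: the classical bounds used in our corollaries (BCH, HT, Roos, shift) depend only on the zero set of the defining polynomial and extend verbatim to the constacyclic code over the splitting field $\F$ with the same defining set $P$. Once this is made explicit, the contradiction above goes through and completes the proof.
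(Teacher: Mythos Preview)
Your argument is correct and follows exactly the paper's route: dispose of the degenerate case $d(\mathbb{C}_P)=1$, then assume a codeword of weight $\omega<\min(d_P,d(\mathbb{C}_P))$, form $\vec{s}_k=V_P\vec{c}_k^\top$ and the matrix $S=[\vec{s}_0\ \cdots\ \vec{s}_{m-1}]$, deduce $\widetilde{H}_P S^\top=0$, and reach a contradiction from the row-weight bound. Your explicit remark that $d_P$ must be applied to vectors over $\F$ rather than $\Fq$ is a subtlety the paper leaves implicit; as you note, it is harmless for the defining-set bounds (BCH, HT, Roos, shift) actually used in the corollaries.
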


Theorem \ref{main thm} allows us to use any minimum distance bound derived for constacyclic codes based on their defining set. The following special cases are immediate after the preparation that we have done in Section \ref{basics} (cf. Theorems \ref{Roos} and \ref{shift bound}).

\begin{cor}\label{Cor-Roos-Shift}
Let $C\subseteq R^\ell$ be a $\lambda$-QT code of index $\ell$ with $\overline{\Omega}\varsubsetneqq\Omega$ as its nonempty set of eigenvalues.
\begin{enumerate}[leftmargin=*]
\item[i.] Let $N$ and $M$ be two nonempty subsets of $\Omega$ such that $MN\subseteq\overline{\Omega}$, where $MN:=\frac{1}{\alpha}\bigcup_{\varepsilon\in M} \varepsilon N$. If there exists a consecutive set $M'$ containing $M$ with $|M'|\leq |M|+d_N-2$, then $d(C)\geq \min(|M|+d_N-1,d(\mathbb{C}_{MN}))$.\vspace{3pt}

\item[ii.] For every $A\subseteq\Omega$ that is independent with respect to $\overline{\Omega}$, we have $d(C)\geq \min(|A|,d(\mathbb{C}_{T_A}))$, where $T_A:=A\cap \overline{\Omega}$.
\end{enumerate}
\end{cor}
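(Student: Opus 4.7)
The plan is to obtain both bounds as direct consequences of Theorem \ref{main thm}, in each case choosing an appropriate nonempty subset $P \subseteq \overline{\Omega}$ and substituting, for the quantity $d_P$ appearing in (\ref{gen spectral}), the corresponding classical minimum-distance bound for constacyclic codes prepared in Section \ref{basics}.

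For part (i), I would take $P := MN$, which is nonempty and lies in $\overline{\Omega}$ by hypothesis. The Roos bound (Theorem \ref{Roos}), applied to any $\lambda$-constacyclic code $C_{MN}$ of length $m$ whose defining set contains $MN$, yields $d(C_{MN}) \geq |M| + d_N - 1$ by virtue of the consecutive set $M'$ with $|M'| \leq |M| + d_N - 2$. Substituting $d_P = |M| + d_N - 1$ into (\ref{gen spectral}) and noting that $\mathbb{C}_P = \mathbb{C}_{MN}$ yields the claim at once.

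For part (ii), I would take $P := T_A$, treating first the case $T_A \neq \emptyset$. The key auxiliary step is to show that $A$ remains independent with respect to $T_A$ whenever it is independent with respect to $\overline{\Omega}$, which I would prove by induction on the derivation of $A$ through Conditions \ref{cond:2}) and \ref{cond:3}) in the definition of independent sets, exploiting the identity $A \cap \overline{\Omega} = T_A$ so that every element introduced from $\Omega \setminus \overline{\Omega}$ in the original derivation still lies outside $T_A$. Once this is in hand, the constacyclic form of the shift bound (Theorem \ref{shift bound} specialized as in the discussion following Theorem \ref{shift bound} and Remark \ref{Roos remark}) gives $d(C_{T_A}) \geq |A|$ for every $\lambda$-constacyclic code $C_{T_A}$ whose defining set contains $T_A$, and Theorem \ref{main thm} then produces $d(C) \geq \min(|A|, d(\mathbb{C}_{T_A}))$. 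The boundary case $T_A = \emptyset$ is absorbed by the convention $d(\mathbb{C}_{T_A}) = \infty$ in Definition \ref{eigencode}, reducing the right-hand side to $|A|$ and permitting a direct codeword-weight argument, or alternatively the use of Theorem \ref{main thm} with a larger $P$ containing $T_A$.

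The main obstacle I anticipate is the independence-restriction lemma underlying part (ii): Condition \ref{cond:2}) requires each intermediate set to be a subset of the reference set, so shrinking the reference set from $\overline{\Omega}$ to $T_A$ can in principle invalidate steps of the original derivation, and one may need to reorganize the order of applications of Conditions \ref{cond:2}) and \ref{cond:3}) so that the intermediate sets still lie in $T_A$ whenever Condition \ref{cond:2}) is invoked. Part (i) is a routine substitution; once the independence-restriction point is settled, part (ii) reduces to an analogous substitution using the shift bound for constacyclic codes.
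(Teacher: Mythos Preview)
Your approach is the same as the paper's: both parts are derived from Theorem~\ref{main thm} by choosing $P=MN$ for (i) and $P=T_A$ for (ii), then substituting the Roos bound (Theorem~\ref{Roos}) and the shift bound (Theorem~\ref{shift bound}) for $d_P$. The paper's own proof writes out the auxiliary matrices $\widetilde{H}_{MN}$, $\widetilde{H}_{T_A}$ and their tensor products with the eigenspace bases explicitly, but beyond that simply asserts that ``the result follows in a similar way by using the shift bound''; it does not address the independence-restriction subtlety you raise for part~(ii), so your proposal is, if anything, more careful than the published argument.
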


\begin{proof}\hfill
\begin{enumerate}[leftmargin=*]
\item[i)] Let $N=\{\alpha\xi^{u_1},\ldots,\alpha\xi^{u_r}\}$ and $M=\{\alpha\xi^{v_1},\ldots,\alpha\xi^{v_s}\}$ be such that there exists a consecutive set $M'=\{\alpha\xi^z: v_1\leq z\leq v_s\}\subseteq\Omega$ containing $M$ with $|M'|\leq |M|+d_N-2$. We define the matrices 
\small{\[\widetilde{H}_{N}:=\setlength\arraycolsep{4pt}\begin{pmatrix}
1&\alpha\xi^{u_1}&(\alpha\xi^{u_1})^2&\ldots&(\alpha\xi^{u_1})^{m-1}\\
\vdots & \vdots & \vdots & \vdots & \vdots \\
1&\alpha\xi^{u_r}&(\alpha\xi^{u_r})^2&\ldots&(\alpha\xi^{u_r})^{m-1}
\end{pmatrix},\]\vspace{3pt}
\[\widetilde{H}_{M}:=\setlength\arraycolsep{4pt}\begin{pmatrix}
1&\alpha\xi^{v_1}&(\alpha\xi^{v_1})^2&\ldots&(\alpha\xi^{v_1})^{m-1}\\
\vdots & \vdots & \vdots & \vdots & \vdots \\
1&\alpha\xi^{v_s}&(\alpha\xi^{v_s})^2&\ldots&(\alpha\xi^{v_s})^{m-1}
\end{pmatrix}. \]}
\normalsize
Consider the joint subset $MN=\{\alpha\xi^{u_i+v_j} : 1\leq i\leq r, 1\leq j\leq s\} \subseteq\overline{\Omega}$. Let $B_k$ be the $k^{th}$ column of $\widetilde{H}_N$ for $k \in \{0,\ldots,m-1\}$. We create the joint matrix
\small{\begin{equation*}
\widetilde{H}_{MN}=\setlength\arraycolsep{3pt}\begin{pmatrix}
B_0 &\alpha^{v_1}B_1&(\alpha^{v_1})^2B_2&\ldots&(\alpha^{v_1})^{m-1}B_{m-1}\\
\vdots & \vdots & \vdots & \vdots & \vdots \\
B_0 &\alpha^{v_s}B_1&(\alpha^{v_s})^2B_2&\ldots&(\alpha^{v_s})^{m-1}B_{m-1}
\end{pmatrix}.
\end{equation*}}
\normalsize
Now let $\mathcal{V}_{MN}:=\bigcap_{\beta\in MN}\mathcal{V}_{\beta}$ denote the common eigenspace of the eigenvalues in $MN$ and let $V_{MN}$ be the matrix consisting of a basis for $\mathcal{V}_{MN}$, built as in (\ref{Eigenspace}). Let $\mathbb{C}_{MN}$ be the eigencode corresponding to $\mathcal{V}_{MN}$.  Setting $\widehat{H}_{MN} :=\widetilde{H}_{MN} \otimes V_{MN}$ implies $\widehat{H}_{MN} \vec{c}^{\top}=\vec{0}$ for all $\vec{c}\in C$. The rest of the proof is identical with the proof of Theorem \ref{main thm}, where $P$ is replaced by $MN$, and the result follows by the Roos bound (Theorem \ref{Roos}).

\item[ii)] For each independent $A\subseteq\Omega$ with respect to $\overline{\Omega}$, let $T_A=A\cap \overline{\Omega}=\{\alpha\xi^{w_1},\alpha\xi^{w_2},\ldots,\alpha\xi^{w_y}\}$. Since $\overline{\Omega}$ is a proper subset of $\Omega$, a nonempty $T_A$ can be obtained by the recursive construction of $A$. We define 
$$\widetilde{H}_{T_A}=\begin{pmatrix}
1&\alpha\xi^{w_1}&(\alpha\xi^{w_1})^2&\ldots&(\alpha\xi^{w_1})^{m-1}\\
\vdots & \vdots & \vdots & \vdots  & \vdots \\
1&\alpha\xi^{w_y}&(\alpha\xi^{w_y})^2&\ldots&(\alpha\xi^{w_y})^{m-1}
\end{pmatrix}.$$
Let $V_{T_A}$ be the matrix corresponding to a basis of $\mathcal{V}_{T_A}$, which is the intersection of the eigenspaces belonging to the eigenvalues in $T_A$. Let $\mathbb{C}_{T_A}$ be the eigencode corresponding to the eigenspace $\mathcal{V}_{T_A}$. We again set $\widehat{H}_{T_A} := \widetilde{H}_{T_A} \otimes V_{T_A}$ and the result follows in a similar way by using the shift bound (Theorem \ref{shift bound}).
\end{enumerate}
\end{proof}

\begin{rem}
We can obtain the QT analogues of the BCH-like bound in \cite[Theorem 2]{ST} and the HT-like bound in \cite[Theorem 1]{ZL} by using Remarks \ref{Roos remark} and \ref{shift remark}.
\end{rem}

\section{Examples} \label{exmp section}

We begin with two examples of QC codes ($\lambda=1$) for which Corollary \ref{Cor-Roos-Shift} yields the exact distances. 

\begin{ex}
Let $\gamma$ be a primitive $23^{rd}$ root of unity. Let $C\subseteq\F_2^{92}$ be the binary QC code with $\ell=4$, $d(C)=7$ and eigenvalues $\overline{\Omega}=\{\gamma^i : i = 1, 2, 3, 4, 6, 8, 9, 12, 13, 16, 18\}$. The common eigenspace is generated by $V_{\overline{\Omega}}=V=I_4$ over $\F_{2^{11}}$, which is the splitting field of $x^{23}-1$. We have $\mathbb{C}_{\overline{\Omega}}=\{\vec{0}_4\}$ and therefore $d(\mathbb{C}_{\overline{\Omega}})=\infty$. Hence, Theorem \ref{main thm} yields $d(C)\geq d_P$, for $P=\overline{\Omega}$. The associated cyclic code $C_{\overline{\Omega}}$ is the well-known binary Golay code of length 23, which has minimum distance $d_{\overline{\Omega}}=7$, which is equal to $d(C)$. Note that the shift bound yields the exact distance of binary Golay code (see \cite[Example 7]{LW}) with $A=\{\gamma^i : i = 0, 1, 3, 4, 5, 6, 16, 18\}$ and $\mathbb{C}_{T_A}=\mathbb{C}_{\overline{\Omega}}=\{\vec{0}_4\}$, hence Corollary \ref{Cor-Roos-Shift} ii. is sharp in this example. We also note that the Roos bound estimates 5 for $d(C_{\overline{\Omega}})$, as does the BCH bound.
\end{ex}

\begin{ex}
Let $\eta$ be a primitive $26^{th}$ root of unity. Consider the ternary QC code $C\subseteq\F_3^{104}$ with $\ell=4$, minimum distance $6$ and eigenvalues $\overline{\Omega}=\{\eta^i : i = 0, 13, 14, 16, 17, 22, 23, 25\}$. The common eigenspace is generated by $V_{\overline{\Omega}}=V=I_4$ over $\F_{3^3}$, which is the splitting field of $x^{26}-1$. We again have $\mathbb{C}_{\overline{\Omega}}=\{\vec{0}_4\}$ and therefore $d(\mathbb{C}_{\overline{\Omega}})=\infty$.  Hence, Theorem \ref{main thm} yields $d(C)\geq d_{\overline{\Omega}}$. The cyclic code $C_{\overline{\Omega}}$ has minimum distance $d_{\overline{\Omega}}=6=d(C)$. Note that the Roos bound yields the exact distance of $C_{\overline{\Omega}}$: let $N=\{\eta^{13},\eta^{14}\}$ and $M=\{\eta^0,\eta^3,\eta^9,\eta^{12}\}$. Then $d_N=3$ and $M'=\{\eta^0,\eta^3,\eta^6,\eta^9,\eta^{12}\}=\{\eta^{3i} : 0\leq i\leq 4\}$, so $|M'|=5\leq 4+3-2$. We get $d_{MN}\geq 4+3-1$, where $\mathbb{C}_{MN}=\{\vec{0}_4\}$. However, the shift bound estimates 5 for $d(C_{\overline{\Omega}})$ (see \cite[Example 26.7]{EL}), hence Corollary \ref{Cor-Roos-Shift} i. is sharp here.
\end{ex}

In \cite{ST}, Semenov and Trifonov compared their BCH-like spectral bound with several other bounds given for QC codes. In Table~\ref{tab:comparison}, we compare the estimates of the general spectral bound given in (\ref{gen spectral}) with the ST and ZL bounds for a number of binary and ternary codes. The actual distance of the QT code and the estimates of the spectral, ST and ZL bounds are denoted by $d, d_{SP}, d_{BCH}$ and $d_{HT}$, respectively. We consider the case $P=\overline{\Omega}$ so that $d_{\overline{\Omega}}=d(C_{\overline{\Omega}})$, and the search using Magma (\cite{BCP}) is restricted to QT codes with $\mathbb{C}_{\overline{\Omega}}=\mathbb{C}_{BCH}=\mathbb{C}_{HT}=\{\vec{0}_{\ell}\}$ (\ie, $d(\mathbb{C}_{\overline{\Omega}}) = d(\mathbb{C}_{BCH}) = d(\mathbb{C}_{HT})=\infty$), due to their ease of computation. For each QT code listed, its eigenvalue set $\overline{\Omega}$ is given in terms of an index set $\mathcal{I}$, where $\overline{\Omega}=\{\xi^i : i \in \mathcal{I}\}$, for some primitive $m^{th}$ root of unity $\xi$.

\begin{table}[ht] 
\caption{Spectral bound versus ST and ZL bounds}
\setlength{\tabcolsep}{3pt}
\resizebox{0.485\textwidth}{!}{
\begin{tabular}{c|c|ccc|ccc|l}
No & $q$ & $\lambda$ & $m$ & $\ell$ & $d_{_{BCH}}$ & $d_{_{HT}}$ &  $d\se d_{_{SP}}$& \hspace{35pt}$\mathcal{I}$ for $\overline{\Omega}$ \\
\hline
1 & $2$ & $1$ & $23$ & $2$ & $5$ & $5$ & $7$ & $\{ 1\sk 4, 6, 8, 9, 12, 13, 16, 18\}$\\ 
2 & &  & $33$ & $2$  & $8$ & $10$ & $12$ & $\{0, 3, 5\sk 7, 9\sk 15, 18\sk 24, 26\sk 28, 30\}$\\
3 & &  & $39$ & $2$  & $7$ & $8$ & $12$ & $\{ 3, 6, 7, 9, 12\sk 15, 17\sk 19, 21, 23, 24, $\\&&&&&&&&$26\sk 31, 33\sk 38 \}$\\ 
4 & &  & $21$ & $3$  & $5$ & $6$ & $8$ & $\{ 3, 5\sk 7, 9, 10, 12\sk 15, 17\sk 20 \}$ \\ 
5 & &  & $33$ & $3$  & $5$ & $8$ & $11$ & $\{ 1\sk 4, 6, 8, 9, 11, 12, 15\sk 18, 21, 22,$\\&&&&&&&&$24, 25, 27, 29\sk 32 \}$\\ 
\hline
6 & $3$ & $1$ & $13$ & $2$ & $4$ & $5$ & $6$ & $\{ 0, 2, 4\sk 6, 10, 12 \}$ \\ 
7 & & & $20$ & $2$ & $5$ & $5$ & $8$ & $\{ 0, 1, 3\sk 5, 7\sk 10, 12, 15, 16 \}$\\ 
8 & &  & $40$ & $2$ & $11$ & $17$ & $20$ & $\{ 0, 2, 4\sk 8, 11\sk 19, 21\sk 26, 28, 29,$\\&&&&&&&&$ 31\sk 39 \}$\\
9 & &  & $26$ & $3$ & $5$ & $8$ & $10$ & $\{1\sk 4, 6, 8\sk 10, 12, 13, 17, 18, 20,$\\&&&&&&&&$23\sk 25 \}$\\
10 & &  & $44$ & $3$ & $10$ & $11$ & $18$ & $\{ 0\sk 7, 9\sk 13, 15\sk 23, 25, 27, 29\sk 31,$\\&&&&&&&&$33, 35\sk 37, 39, 41, 43 \}$\\
\hline
11 & $3$ & $-1$ & $20$ & $2$ & $4$ & $5$ & $6$ & $\{ 3, 6, 10\sk 12, 14, 15, 17\sk 19 \}$\\
12 & &  & $28$ & $2$ & $4$ & $6$ & $9$ & $\{ 0\sk 2, 4, 6, 7, 9, 11\sk 13, 17, 19, 22,$\\&&&&&&&&$ 24 \}$\\ 
13 & &  & $41$ & $2$ & $11$ & $13$ & $20$ & $\{ 0\sk 4, 6, 7, 9\sk 14, 17\sk 19, 21\sk 23,$\\&&&&&&&&$ 26\sk 31, 33, 34, 36\sk 40 \}$ \\
14 & &  & $28$ & $3$ & $3$ & $4$ & $6$ & $\{ 3, 10, 14, 15, 17, 18, 23, 24, 26, 27 \}$ \\ 
15 & &  & $28$ & $3$ & $7$ & $9$ & $11$ & $\{ 0\sk 2, 4\sk 9, 11\sk 13, 16, 17, 19\sk 22,$\\&&&&&&&&$24, 25 \} $ \\
\hline
\end{tabular}}\vspace{-10pt}
\label{tab:comparison}
\end{table}


\end{document}